\xdef\daytime{\the\count2:\ifnum\count1<10 0\fi \the\count1}
\def\msbl{\ensuremath{\langle}}
\def\msbr{\ensuremath{\rangle}}
\def\lsbl{\ensuremath{\vert}}
\def\lsbr{\ensuremath{\vert}}
\def\reg#1{\emph{#1}}
\def\Rcomp{\reg{comp}\xspace}
\def\Rleft{\reg{left}\xspace}
\def\Rright{\reg{right}\xspace}
\def\RRIGHT{\reg{right2}\xspace}
\def\Rmax{\reg{max}\xspace}
\def\Rnext{\reg{next}\xspace}
\def\spacer{\vrule depth 0.6ex height 3ex width 0ex}
\def\spacerb{\vrule depth 1.2ex height 2.4ex width 0ex}
\def\L#1{\hbox to 2.5em{\spacer\hss\ensuremath{\overleftarrow{#1}}\hss}}
\def\R#1{\hbox to 2.5em{\spacer\hss\ensuremath{\overrightarrow{#1}}\hss}}
\def\N#1{\hbox to 2.5em{\spacer\hss\ensuremath{{#1}}\hss}}
\begin{document}
\title{Real-time Sorting of Binary Numbers on One-dimensional CA}

\author[lab1]{Th. Worsch}{Thomas Worsch}
\address[lab1]{KIT (Karlsruhe Institute of Technology)}
\email{worsch@kit.edu}
\author[lab2]{H. Nishio}{Hidenosuke Nishio}
\address[lab2]{Kyoto University}
\email{yra05762@nifty.com}	



\begin{abstract}\noindent
  A new fast (real time) sorter of binary numbers by one-dimensional
  cellular automata is proposed.  It sorts a list of $n$ numbers
  represented by $k$-bits each in exactly $nk$ steps. This is only one
  step more than a lower bound.
\end{abstract}
\maketitle

\section{Introduction}
\label{sec:intro}

Sorting is one of the most fundamental subjects of computer science
and many sorting algorithms including sorting arrays and networks can
for example be found in volume~3 of Knuth's TAOCP
\cite{Knuth98}. However, for cellular automata there are only a few
papers on this important topic. It should be pointed out that the
algorithm described by one of the authors in an earlier paper
\cite{Nishio75} has running $3nk$ (despite the title starting with the
words ``real time''). 

We are not aware of any speedup techniques which would allow to turn
this CA or any other solving the problem into one running in
real-time, i.\,e.\ exactly the number of steps which is the length of
the input.  In the present paper we propose a sorting algorithm of
binary numbers and its implementation on one-dimensional CA with
nearest neighbors, which sorts $n$ numbers of $k$ bits each in exactly
$nk$ steps.

Sequential comparison based sorting algorithms need time $\Omega(n\log
n)$ where $n$ is the number of elements to be sorted and it is assumed
that each comparison can be done in constant time independent of the
size of the elements. The latter assumption is also usually made for
parallel sorting algorithms. On linear arrays odd-even transposition
sort needs exactly $n$ steps. But there the additional assumption is
made that from the beginning each processor knows the parity of its
own address (assuming those are e.\,g.\ $1$ to $n$). Of course on
parallel models from the second machine class \cite{EmdeBoas90} like
PRAM there are algorithms running in poly-logarithmic (or even
logarithmic) time. But for these models one has to assume non-local
communication in constant time when embedded into Euclidean space
\cite{Schorr83}.

The rest of the paper is organized as follows.  In
Section~\ref{sec:statements} we precisely state the problem and the
results obtained in this paper. In Section~\ref{sec:lower-bound} we
give a short proof for the lower bound of the sorting problem.  The
main aspects of our algorithm are presented in
Section~\ref{sec:base-alg}. The first version does not achieve a
running time which matches the lower bound. For that two modifications
are needed which are described in Section~\ref{sec:fast-alg}.

\section{Statement of problem and results}
\label{sec:statements}

We are considering one-dimensional CA with von Neumann neighborhood of
radius $1$ and assume that the reader is familiar with these
concepts. Since we will not define our CA on such a low level there is
no need to introduce any related formalism.

We also assume that the reader is familiar with the firing squad
synchronization problem \cite{Umeo05}. If a block of $k$ cells needs
to be synchronized and there are generals at both ends, then
synchronization can be achieved in exactly $k$ steps.

The inputs for our CA are provided as finite words with all
surrounding cells in a quiescent state. Those cells will never be
used during computations.

The inputs which have to be processed by our CA are $n$ numbers of
equal length $k$, with the most and least significant bits marked as
such.

\begin{problem}
  \label{prob:sorting}
  The input alphabet is $A=\{ 0, 1, \msbl 0, \msbl 1, 0\lsbr, 1\lsbr \}$.

  Each input $w$ that has to be processed properly is of the form 
  \[
  w=w_1\cdots w_n=\msbl x_1 y_1 z_1\lsbr\cdots \msbl x_n y_n z_n \lsbr
  \]
  for some $n\geq 1$ and $k\geq 2$ where all $x_i, z_i\in\{0,1\}$ and
  all $y_i\in\{0,1\}^{k-2}$. Each $w_i$ is the binary representation
  of a non-negative integer (also denoted $w_i$) with most significant
  bit $x_i$ and least significant bit $z_i$.

  For every such input after a finite number of steps a stable
  configuration of the form $w_{\sigma(1)} \cdots w_{\sigma(n)}$ has to be
  reached where $\sigma$ is a permutation of the numbers $1,\dots,n$ such
  that $w_{\sigma(i)}\leq w_{\sigma(i+1)}$ holds for all $1\leq i<n$.
\end{problem}
We note that it would have been sufficient to mark either the most or
the least significant bits, because the other end of each number can
then always be identified by looking at neighbor cells.

The above problem statement also excludes the case of $1$-bit
inputs. For those the ``traffic rule'' 184 can be easily extended to
do sorting, taking into account quiescent neighbors. The resulting CA
works as follows: A cell in state $1$ ($0$) becomes $0$ ($1$) if its
right (left) neighbor is $0$, otherwise it keeps its current state.

In the following we always call a sequence of cells which initially
stores one input number $w_i$ the \emph{block} $i$ (or simply a
block).

It is clear that it can be necessary to move a number from block $1$
to block $n$. This immediately gives a lower bound of $(n-1)k$ steps
for the sorting time. We shall see, that one can do slightly better:

\begin{theorem}
  \label{thm:lower-bound}
  Every CA solving Problem~\ref{prob:sorting} needs at least time
  $nk-1$.
\end{theorem}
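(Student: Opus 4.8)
The plan is to refine the obvious $(n-1)k$ argument by exhibiting an input on which some cell cannot have settled into its final state even after $(n-1)k$ steps, thereby gaining one extra step. First I would pick a family of inputs in which the number originally stored in block $1$ must end up in block $n$ in the sorted output — for instance, take $w_1$ strictly larger than all of $w_2,\dots,w_n$. Then in the final configuration the content of block $n$ is a copy of the original $w_1$. The key observation is the standard CA light-cone (finite-speed-of-information) bound: the state of a cell $c$ at time $t$ depends only on the initial states of cells within distance $t$ of $c$. The rightmost cell of block $n$ is at distance exactly $nk-1$ from the leftmost cell of block $1$ (there are $n$ blocks of $k$ cells, so $nk$ cells in a row, and the extreme cells are $nk-1$ apart).

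The heart of the argument is then a two-input indistinguishability trick. I would compare the input $w$ above with a second legal input $w'$ obtained by flipping the most significant bit $x_1$ of $w_1$ (i.e. changing $\msbl 1$ to $\msbl 0$ in block $1$), chosen so that $w'_1$ is now, say, the smallest element; the sorted outputs of $w$ and $w'$ differ in block $n$ (in $w$ it holds the value with $x=1$, in $w'$ some other number). Since the two inputs differ only in the leftmost cell, a cell at distance $< nk-1$ from that cell has the same state at time $nk-2$ under both inputs. In particular the rightmost cell of block $n$, together with enough of its neighborhood to determine the value stored in block $n$, cannot yet reflect the difference at time $nk-2$: at that time the configurations restricted to block $n$ are identical for $w$ and $w'$. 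But a correct CA must eventually reach stable configurations that differ on block $n$. Hence at time $nk-2$ the configuration on block $n$ cannot already be the correct stable one for both inputs, so for at least one of them the computation is not finished; therefore at least $nk-1$ steps are required.

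The main obstacle I expect is purely bookkeeping: one must be careful that the pair of inputs chosen is actually legal under Problem~\ref{prob:sorting} (all numbers of the same length $k\ge 2$, proper $\msbl\cdot\lsbr$ markings), that the sorted outputs genuinely disagree on the cell(s) at maximal distance, and that "the value stored in block $n$" is determined by the states of cells all lying at distance $\le nk-2$ from the flipped cell — which is immediate since every cell of block $n$ is within that distance except possibly the rightmost one, and one can instead flip a bit of $w_1$ that forces a disagreement already in the leftmost cell of block $n$, or equivalently argue about the rightmost cell directly using distance $nk-1$. A small amount of care is also needed because "stable configuration" is reached only after a finite but unbounded number of steps, so the argument shows the running time is $\ge nk-1$ rather than exhibiting a cell that is provably wrong at a fixed time; phrasing it as "no CA can have halted correctly on both inputs by time $nk-2$" handles this cleanly.
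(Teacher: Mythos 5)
Your proposal is correct and is essentially the paper's own argument: the paper takes $w_1=\msbl 0 1^{k-2}1\lsbr$ and $w_2=\cdots=w_n=\msbl 1 0^{k-2}0\lsbr$, flips the leftmost bit of $w_1$, observes that the rightmost \emph{bit} of the sorted output must then change from $0$ to $1$, and invokes the light-cone bound over the distance $nk-1$ between the extreme cells. The only blemish is your phrase ``a cell at distance $<nk-1$ \dots has the same state at time $nk-2$'' (it should be distance $>nk-2$, i.e.\ exactly the rightmost cell), but the conclusion you draw from it is the correct one, and the bookkeeping you flag is resolved precisely by the paper's concrete choice of values.
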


Until now the fastest sorting algorithms known needed time $cnk$ for
some constant $c>1$ with no obvious possibility to speed up the
computation to run in $nk$ steps. The main contribution of the present
paper therefore is the following:

\begin{theorem}
  \label{thm:upper-bound-nk}
  There is a CA (which does not depend on $n$ or $k$) with von Neumann
  neighborhood of radius $1$ solving the Sorting
  Problem~\ref{prob:sorting} in exactly $nk$ steps.
\end{theorem}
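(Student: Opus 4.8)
The goal is to design a single CA (independent of $n$ and $k$) that sorts $n$ numbers of $k$ bits in exactly $nk$ steps. My approach would be to think of the computation as running a block-local compare-and-swap process along the lines of odd–even transposition sort, but with the crucial twist that the CA does not know the parities of block boundaries in advance and must sort $k$-bit numbers rather than single values. The first step is to build a ``base algorithm'' (as the authors promise in Section~\ref{sec:base-alg}): each pair of adjacent blocks repeatedly compares its two numbers and swaps them if they are out of order, alternating which pairs are active. Comparison of two $k$-bit numbers stored in adjacent blocks is naturally done most-significant-bit first, so a signal sweeping from the $\msbl$ ends can decide the comparison in $O(k)$ steps, and a swap is realised by routing the bits of one block leftward past the bits of the other block rightward — again $O(k)$ steps for one exchange. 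Running $n$ phases of this gives $O(nk)$, but with a constant strictly larger than $1$; the interesting content is squeezing the constant down to exactly $1$.

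The second step is to pin down where the slack comes from and remove it. One source of slack is synchronization: to start each compare/swap phase in lockstep one would naively synchronize a block (or a pair of blocks) using the firing squad solution, costing another $k$ steps per phase. Here I would exploit the fact, recalled in the excerpt, that a block of $k$ cells with generals at \emph{both} ends synchronizes in \emph{exactly} $k$ steps — and more importantly, that the $\msbl$ and $\lsbr$ markers give us generals for free at both ends of every block, so the synchronization can be overlapped with, rather than added to, the comparison/routing work. A second source of slack is that comparison and transport are done sequentially; instead I would pipeline them, letting the comparison verdict chase the bits so that the bits of a block are already in motion (or already know they will stay put) by the time the verdict is known, so that a full compare-and-conditionally-swap of two adjacent blocks costs essentially the time to stream $k$ bits across one block boundary, i.e.\ about $k$ steps rather than $2k$ or $3k$. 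These are presumably the ``two modifications'' mentioned at the end of the excerpt as living in Section~\ref{sec:fast-alg}.

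The third step is the global timing accounting. With the per-phase cost driven down to $k$ and $n$ phases of odd–even transposition sort sufficing to sort $n$ elements, one gets $nk$ steps, and one then has to check the boundary bookkeeping: that the very first phase can begin at step $1$ (not after a separate wake-up/initialization sweep), that the last phase finishes and the configuration becomes stable at step $nk$ and not one step later, and that odd and even phases can be interleaved without a one-step hiccup at each handover. The lower bound of Theorem~\ref{thm:lower-bound} ($nk-1$) tells us there is exactly one step of unavoidable overhead we are allowed, so the accounting must be tight to within a single step; I would track a potential/position argument for a ``hardest'' input (largest number starting in block $1$) to confirm both that $nk$ steps suffice and that the construction does not accidentally waste two steps somewhere.

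**Main obstacle.** The hard part will be the exact-constant timing rather than the correctness of ``some'' $O(nk)$ sorter — specifically, making the compare-and-swap of two adjacent $k$-bit blocks, \emph{including} re-synchronizing the participating cells for the next phase, fit in $k$ steps amortized, with no additive per-phase constant. This forces the comparison signals, the bit-transport signals, and the firing-squad-style synchronization signals to all share the same $k$-step window and the same marker cells ($\msbl$, $\lsbr$) as generals, and to hand off cleanly from an odd phase to an even phase (whose block boundaries are offset) without either losing a step or colliding. Getting this interleaving to be seamless — and in particular handling the two ``end'' blocks, block $1$ and block $n$, which are paired only in alternate phases — is where I expect the real design effort, and the ``$+1$'' in the lower bound strongly suggests the authors found that exactly one step of this is genuinely unavoidable and the rest can be absorbed.
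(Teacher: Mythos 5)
Your plan correctly identifies odd--even transposition sort as the skeleton and correctly locates the difficulty in the exact constant, but it stops at the point where the paper's actual work begins, and the mechanism you sketch is not the one that makes the timing close. The paper does \emph{not} run $n$ compare-and-swap phases of cost $k$ on the $n$ blocks. Its central idea is to \emph{duplicate} every input number into two oppositely directed copies, stored in parallel registers \Rleft and \Rright (one copy bit-mirrored), so that one sorts $N=2n$ numbers; each block is split into two sub-blocks of length $k/2$, and because most significant bits of a left-moving and a right-moving copy meet head-on at every $R\Vert L$ sub-block boundary, one transposition step is realised as a streaming, bit-serial comparison costing only $k/2$ CA steps. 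This gives $2n\cdot k/2=nk$ for the sorting proper, and is precisely what replaces your unproved assertion that ``a full compare-and-conditionally-swap of two adjacent blocks costs about $k$ steps'': you give no mechanism by which the comparison verdict, the bit transport, and the hand-off to the next (offset) phase all fit into $k$ steps, and it is not at all clear that your direct block-pairing version can be made to work.

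Two further ingredients of the paper's proof are absent from your proposal and are genuinely needed. First, to remove the additive $k$ of the setup phase, the mirroring is computed by shifting left and reflecting at the $L$-cell so that the first comparisons already start after $k/2$ steps, and a third register \RRIGHT is introduced to resolve the resulting situation in which \emph{three} most significant bits meet at one boundary (the largest of the three is shunted into \RRIGHT). Second, and more importantly, after these speedups the main algorithm provably delivers the correct contents of the \Rleft registers in every block \emph{except the rightmost one}; the paper must run a separate parallel maximum-computation (shifting all inputs rightward and maintaining registers \Rmax and \Rnext in block $n$, finishing exactly at step $nk$) to fix that block. Your proposal neither anticipates that the last block fails nor provides a repair, so even granting your timing claims the argument would not establish Theorem~\ref{thm:upper-bound-nk}. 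Your closing remark about the ``$+1$'' is also off target: the paper's algorithm runs in $nk$ steps against a lower bound of $nk-1$, and no step is identified as ``genuinely unavoidable overhead''; the one-step gap between upper and lower bound is simply left open.
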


\section{Lower bound on sorting time}
\label{sec:lower-bound}

First consider the input $w=w_1w_2\cdots w_n$ where $w_1=\msbl 0
1^{k-2}1\lsbr$ and $w_2=\cdots=w_n=\msbl{1}0^{k-2}{0}\lsbr$. Clearly this
input sequence is already sorted and the rightmost bit of the output
is a $0$.

If on the other hand we flip the leftmost bit of the first block only
and consider the input
\[
w'_1\cdot w_2\cdots w_n=\msbl 1 1^{k-2}1\lsbr
\cdot\msbl{1}0^{k-2}{0}\lsbr\cdots\msbl{1}0^{k-2}{0}\lsbr
\]
then the correct sorted output is
\[
w_2\cdots w_n \cdot w'_1 =
\msbl{1}0^{k-2}{0}\lsbr\cdots\msbl{1}0^{k-2}{0}\lsbr \cdot \msbl 1
1^{k-2}1\lsbr
\]
That is, by changing only the leftmost bit of the input the rightmost
bit of the output must change. Hence no CA correctly solving the
sorting problem can be faster then the distance between leftmost and
rightmost bit which is $nk-1$.

This proves Theorem~\ref{thm:lower-bound}.

\section{The base sorting algorithm}
\label{sec:base-alg}

The goal of this section is to prove a weakened version of
Theorem~\ref{thm:upper-bound-nk}.

\begin{lemma}
  \label{lem:upper-bound-nk+k}
  There is a CA (which does not depend on $n$ or $k$) with the von
  Neumann neighborhood of radius $1$ solving the sorting
  Problem~\ref{prob:sorting} in exactly $k+nk$ steps.
\end{lemma}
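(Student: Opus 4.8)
The plan is to build a sorter around a repeated \emph{bubble-sort-style} sweep, where a single ``comparison token'' walks across all $n$ blocks from right to left, compares adjacent numbers bit by bit, swaps them when they are out of order, and after reaching the left end turns around and starts a new sweep. The key quantitative observation is that one full sweep over $n$ blocks of $k$ bits should be arranged to take exactly $nk$ steps, and that $n-1$ sweeps always suffice to sort $n$ numbers by adjacent transpositions (this is just the worst case of bubble sort). Naively that would give $(n-1)nk$ steps, which is far too slow, so the real idea must be to \emph{pipeline} the sweeps: as soon as the comparison token has passed a pair of blocks, a fresh token can follow behind it, so that after a start-up phase of length $\Theta(nk)$ the sorting is essentially complete. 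I would make this precise by showing that after the token has traversed block boundary $j$ for the $i$-th time, the suffix of the array consisting of the top $i$ blocks (or the prefix of bottom $i$ blocks) is already in final sorted position, exactly as in the standard analysis of odd-even transposition sort on a line; combined with a sweep length of $k$ per block this yields a total running time of the form $k + nk$, the extra additive $k$ coming from the time needed to set up the very first token / synchronize the first block.

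Concretely, the steps I would carry out are: (1) Describe the data layout — each block holds its $k$ bits with the marked most- and least-significant bits serving as the ``generals'' for intra-block synchronization, and reserve a secondary track for the moving comparison token and its bookkeeping. (2) Implement a comparison-exchange on two adjacent $k$-bit numbers: a signal enters at the boundary, scans both numbers from the most significant bit downward (the marks tell it where to start), decides $\le$ or $>$, and on the ``$>$'' verdict swaps the two blocks by a coordinated shift. The crucial timing fact to establish is that this whole operation, boundary-to-boundary, costs exactly $k$ steps, so that a token sweeping past $n$ blocks spends $nk$ steps and emerges at the far side synchronized. (3) Pipeline: arrange that a new leftward sweep is launched at a fixed offset behind the previous one so that at every boundary the sweeps arrive in the correct staggered order; show that the last useful sweep finishes at time $k + nk$ (the leading $k$ is the warm-up). (4) Argue correctness by the transposition-sort invariant: after sweep $i$ the $i$ largest numbers have bubbled to their correct final block, so after $n-1$ sweeps — which have all been issued within the first $O(nk)$ steps and the last of which completes by step $k+nk$ — the configuration is sorted; then add a trivial ``freeze'' mechanism so the configuration is stable from that point on.

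The main obstacle I expect is the \emph{exact} timing bookkeeping, not the high-level correctness. Getting the comparison-exchange to take \emph{precisely} $k$ steps per block — including the cases where a swap does and does not happen, and including what happens at the two physical ends of the array where a neighbor is quiescent — is delicate, because a comparison must read the whole number before it can commit to a swap, yet the swap itself also costs time, and these cannot simply be added without blowing the budget; the resolution is presumably to overlap the commit of one comparison with the scan of the next by carrying partial information in the token's state, and to use the firing-squad-style synchronization (guaranteed by the generals at the block ends, as recalled in the excerpt) to keep all blocks aligned to the same phase. A secondary subtlety is ensuring the pipelined sweeps never collide or overtake one another at a boundary and that each boundary sees them in the right order; this is exactly the classical odd-even transposition structure, but on a CA one must encode the parity/ordering locally rather than assume each cell knows its address. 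Once these timing invariants are nailed down, the bound $k+nk$ should follow, and Section~\ref{sec:fast-alg}'s two modifications will be what shaves off the extra $k$ to reach the optimal $nk$ of Theorem~\ref{thm:upper-bound-nk}.
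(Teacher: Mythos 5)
Your high-level skeleton (adjacent-transposition sorting, roughly $n$ compare--exchange rounds at $k$ steps each, plus a $k$-step warm-up) points in the right direction, but the proposal is missing the one idea that makes a round actually implementable on a radius-$1$ CA, and this is not mere ``timing bookkeeping.'' Two adjacent input numbers are stored most-significant-bit first, so the bits that must be compared against each other sit $k$ cells apart; a single token sitting at a block boundary can never see a corresponding pair of bits, and after a number has been compared on its left side (MSB streaming toward the left boundary) it would have to be reversed before it can be compared on its right side in the next round --- a reversal that costs another $\Theta(k)$ steps you have not budgeted. The paper's construction resolves exactly this: during the $k$-step setup phase every number is \emph{mirrored}, so that each $w_i$ exists in two oppositely oriented copies (one streaming left in the \Rleft track, one streaming right in the \Rright track), odd--even transposition sort is then run on $N=2n$ items with comparison points at \emph{sub-block} boundaries spaced $k/2$ apart, and the comparison and the exchange are fused into bit-serial routing (a three-valued comparator register remembers whether the prefix seen so far is $<$, $=$ or $>$ and routes the smaller number left and the larger right). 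That gives $2n$ rounds of $k/2$ steps each, i.e.\ $nk$ steps after setup, hence $k+nk$. Your proposal contains no mechanism playing the role of the doubling/mirroring, and without it the claimed ``exactly $k$ steps per compare--exchange'' cannot be realized.

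There is also an internal inconsistency in your accounting: if each sweep genuinely traverses all $n$ blocks and costs $nk$ steps, then the last of $n-1$ pipelined sweeps, launched at any positive offset after the first, finishes strictly later than $k+nk$. For the total to be $k+nk$ the ``sweeps'' must degenerate into $n$ fully synchronized rounds running concurrently everywhere --- which is just odd--even transposition sort, and returns you to the unproved claim that one synchronized round can be executed in $k$ steps. So the gap is concrete: either exhibit a data layout in which corresponding bits of adjacent numbers can meet at a comparison point and in which a number is already correctly oriented for its next comparison the moment the current one ends (the paper's mirrored two-track layout does both), or the $k$-per-round budget, and with it the $k+nk$ bound, does not follow.
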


Before going into details and explaining some aspects of the CA on the
cell level, we describe the main idea on the level of numbers. In
particular we will employ a well-known simple algorithm for parallel
sorting.

\subsection{Odd-even transposition sort}
\label{subsec:odd-even}

Throughout this section, one can assume that $N$ is an even number;
this is the case needed for the CA below.

Assume that $N$ numbers $a_1, a_2, \dots, a_N$ are given, arranged in
an array of processors. In addition each processor (or each number)
has a direction (indicated by arrows below).  In each step each pair
of adjacent processors whose arrows point to each other exchange their
numbers. Both processors compare the two numbers; the left one keeps
the smaller number, the right one the larger number, and both change
their direction to the other neighbor. Thus one step can for example
look like this:

\begin{center}
   \begin{tabular}{l@{ }|*{9}{>{$}c<{$}|}}
     \cline{2-7}\cline{9-10}
     $t$ & \L{a_1} & \R{a_2} & \L{a_3} & \R{a_4} & \L{a_5} & \R{a_6} & \dots\dots & \L{a_{N-1}} & \R{a_N}\\
     \cline{2-7}\cline{9-10}
     \multicolumn{1}{c}{ }\\[-1.5ex] 
    \cline{2-7}\cline{9-10}
     $t+1$&\R{b_1} & \L{b_2} & \R{b_3} & \L{b_4} & \R{b_5} & \L{b_6} & \dots\dots & \R{b_{N-1}} & \L{b_N}\\
     \cline{2-7}\cline{9-10}
     \multicolumn{1}{c}{ }\\
  \end{tabular}

  where $b_i=
  \begin{cases}
    \min(a_i,a_{i+1}) & \text{ if } a_i \text{ points to the right}\\
    \max(a_{i-1},a_i) & \text{ if } a_i \text{ points to the left} 
  \end{cases}
  $. \\[1ex]
\end{center}

\noindent
A missing neighboring number to the left is treated as if it were
$-\infty$ and a missing neighboring number to the right is treated as
if it were $\infty$.

It is known that odd-even transposition sort always produces the
correct result after exactly $N$ steps (see e.g.\cite{Knuth98}).

\subsection{Outline of the base algorithm}
\label{subsec:base-outline}

The algorithm which will be the basis for the improved construction in
Section~\ref{sec:fast-alg} will simply work as follows. Given an input
of the form
\begin{center}
  \begin{tabular}{l@{ }|*{9}{>{$}c<{$}|}}
    \cline{2-5}\cline{7-8}
     & \N{w_1} & \N{w_2} & \N{w_3} & \N{w_4} & \dots\dots & \N{w_{n-1}} & \N{w_n}\\
    \cline{2-5}\cline{7-8}
  \end{tabular}
\end{center}
first each number is copied and the two copies get opposite directions
assigned. This will take $k$ steps in the CA.  Instead of storing two
copies side by side they are stored in parallel:
\begin{center}
  \begin{tabular}{l@{ }|*{9}{>{$}c<{$}|}}
    \cline{2-5}\cline{7-8}
     & \L{w_1} & \L{w_2} & \L{w_3} & \L{w_4} & \dots\dots & \L{w_{n-1}} & \L{w_n}\\
    \cline{2-5}\cline{7-8}
     & \R{w_1} & \R{w_2} & \R{w_3} & \R{w_4} & \dots\dots & \R{w_{n-1}} & \R{w_n}\\
    \cline{2-5}\cline{7-8}
  \end{tabular}
\end{center}
These are now treated as $N=2n$ numbers that are sorted using odd-even
transposition sort. In the end one would get
\begin{center}
  \begin{tabular}{l@{ }|*{9}{>{$}c<{$}|}}
    \cline{2-5}\cline{7-8}
     & \L{w_{\sigma(1)}} & \L{w_{\sigma(2)}} & \L{w_{\sigma(3)}} & \L{w_{\sigma(4)}} 
     & \dots\dots & \L{w_{\sigma(n-1)}} & \L{w_{\sigma(n)}}\\
    \cline{2-5}\cline{7-8}
     & \R{w_{\sigma(1)}} & \R{w_{\sigma(2)}} & \R{w_{\sigma(3)}} & \R{w_{\sigma(4)}} 
     & \dots\dots & \R{w_{\sigma(n-1)}} & \R{w_{\sigma(n)}}\\
    \cline{2-5}\cline{7-8}
  \end{tabular}
\end{center}
where $\sigma$ again denotes the ``sorting permutation'' as in
Problem~\ref{prob:sorting}.

During the last sorting step the lower parts and the arrows are
deleted, and the required output is obtained.


It will become clear in the next subsection why it is actually useful
to first copy each number and then seemingly spend twice as much time
for the $N=2n$ sorting steps.  It will be shown that each such step can
be implemented in the CA in $k/2$ steps. Hence the total running time
of the CA for this base version of the algorithm will be $k+nk$ steps.

\subsection{Outline of the CA for the base algorithm}
\label{subsec:base-ca}

In this section we will describe how the base sorting algorithm can be
implemented on a CA.  It will need $n+1$ phases each of which needs
exactly $k$ steps. First comes a setup phase followed by phases $1$,
\dots, $n$. In order to avoid more complicated descriptions,
throughout the rest of the paper we assume that $k$ is even. 

If $k$ is odd the middle cell of a block plays the role of the two
middle cells one would have for the (even) case $k+1$. In that case
synchronization of a block using generals at both ends needs at least
time $2\frac{k+1}{2}-2 = k-1$ and hence is possible in time $k$ (as is
the case for even $k$).

\begin{algorithm}[Setup phase]
  \label{alg:base-setup}
  \leavevmode\mbox{ }
  \begin{enumerate}
  \item During the setup phase the following tasks are carried
    out in each block:
    \begin{itemize}
    \item By sending a signal from the left and the right end of the
      block the two middle cells are found. In each resulting
      \emph{sub-block} the leftmost and the rightmost cell are marked
      as such. They are called $L$ and $R$ respectively.
    \item Using an additional register the mirror image of the input
      number is computed. Below we call the
      register holding the original value \Rleft and the registers
      with the mirrored value \Rright.

      The numbers in the \reg{left} registers will play the role of
      the $\overleftarrow{w_i}$ and the numbers in the \reg{right}
      registers the role of the $\overrightarrow{w_i}$ used in the
      previous subsection.
    \item Using synchronization, the preliminary phase is stopped
      after $k$ steps.
    \end{itemize}
  \item The leftmost and the rightmost cell of the whole input are set
    up as generals. Starting with the first step of phase $1$ an
    algorithm is started to synchronize all $nk$ cells after $nk$
    steps.
  \end{enumerate}
\end{algorithm}

\noindent
A concrete example is shown in Figure~\ref{fig:ex-prep} for two
$6$-bit numbers. The borders between sub-blocks are shown as double
vertical lines. It can be noted that we also use markers for the most
and least significant bits of the mirrored numbers.

\begin{figure}[ht]
  \centering
  \begin{tabular}{l@{\quad}|*{4}{|*{3}{>{$}c<{$}|}}|}
    \multicolumn{1}{c}{ }& \multicolumn{1}{c}{$L$} &    \multicolumn{1}{c}{ }& \multicolumn{1}{c}{$R$} &
    \multicolumn{1}{c}{$L$} &    \multicolumn{1}{c}{ }& \multicolumn{1}{c}{$R$} &
    \multicolumn{1}{c}{$L$} &    \multicolumn{1}{c}{ }& \multicolumn{1}{c}{$R$} &
    \multicolumn{1}{c}{$L$} &    \multicolumn{1}{c}{ }& \multicolumn{1}{c}{$R$} \\
    \cline{2-13}
    \spacerb\Rleft & \msbl 1 & 0 & 0 & 0 & 0 & 0\lsbr & \msbl 0 & 1 & 1 & 1 & 1 & 1\lsbr \\
    \cline{2-13}
    \spacerb\Rright & \lsbl 0 & 0 & 0 & 0 & 0 & 1\msbr & \lsbl 1 & 1 & 1 & 1 & 1 & 0\msbr \\
    \cline{2-13}
  \end{tabular}
  \caption{An example configuration for two $6$-bit numbers after the
    setup phase.}
\label{fig:ex-prep}
\end{figure}

In addition to the registers \Rleft and \Rright the $L$- and $R$-cells
of each sub-block will make use of a register \Rcomp which will hold a
(preliminary) comparison result (see also Figure~\ref{fig:comp}).
Each \Rcomp register can hold one of the values $=$, $<$ or $>$.
Their use is described in Algorithm~\ref{alg:base-c2} below.

The core idea is the following: 
\begin{enumerate}[C1.]
\item Bits are shifted in the \Rleft and \Rright registers in the
  corresponding directions.
\item Whenever the most significant bits of two numbers arrive in a
  pair of adjacent $R\Vert L$-cells, the numbers are compared
  sequentially bit by bit. The smaller number will be directed to move
  to the left and the larger one will be directed to move to the
  right.
\end{enumerate}

\noindent
Part C1 basically means that numbers are unconditionally shifted
everywhere except at $R\Vert L$ pairs. Since those are located at
distance $k/2$ and numbers have length $k$ this might look suspicious
at first sight, because in general a number simultaneously gets
compared at two such pairs. It will become clear later why this does
not pose any problems. Ignoring it for the moment, C1 is easy to
implement:

\begin{algorithm}[Implementation of C1]
  \label{alg:base-c1}
  \leavevmode\mbox{ }
  \begin{enumerate}
  \item The cells which are \emph{not} an $R$- or $L$-cell have a very
    simple behavior.
    \begin{itemize}
    \item The \Rleft register gets its content from the \Rleft
      register of the right neighbor.
    \item The \Rright register gets its content from the \Rright
      register of the left neighbor.
    \end{itemize}
  \item Analogously the \Rleft register of an $L$-cell gets its
    content from the \Rleft register of the right neighbor and the
    \Rright register of an $R$-cell gets its content from the \Rright
    register of the left neighbor.
  \item The same holds for the \Rleft register of a $R$-cell and the
    \Rright register of a $L$-cell if the \Rcomp register of that cell
    has value $=$.

    First of all, this part is needed during the first sub-phase of
    phase $1$ when the $R\Vert L$ pairs in the middle of a block still
    have no meaning. As will be seen this requirement is also
    consistent with the rules for later sub-phases.
  \end{enumerate}
\end{algorithm}

\noindent
Each of the phases $1$, \dots, $n$ is subdivided into two sub-phases of
$k/2$ steps each.

We will now describe how the comparison of numbers is done.  For this
we use $R.\Rleft$ to denote the \Rleft register of the $R$-cell and
similarly for the other cases.  It will be seen that all information
needed to update $R.\Rcomp$ are also available in the neighboring
$L$-cell, so that the invariant $R.\Rcomp=L.\Rcomp$ can be
maintained. Hence it suffices to describe the case of $R.\Rcomp$:

\begin{algorithm}[Implementation of C2]
  \label{alg:base-c2}
  If the two bits in $R.\Rright$ and $L.\Rleft$ are most significant
  ones, then the new value of $R.\Rcomp$ is determined as follows:
  \begin{equation}
    R.\Rcomp \text{ becomes }
    \begin{cases}
      < & \text{ if } R.\Rright< L.\Rleft\\
      = & \text{ if } R.\Rright= L.\Rleft\\
      > & \text{ if } R.\Rright> L.\Rleft\\
    \end{cases} \label{rule:comp}
  \end{equation}
  If the two bits to be compared are not most significant ones, the
  new value of $R.\Rcomp$ is determined as follows:
  \begin{itemize}
  \item If $R.\Rcomp$ already has value $<$ or $>$ it is not changed.
  \item If $R.\Rcomp$ has old value $=$ its new value is determined
    according to rule~\ref{rule:comp} above.
  \end{itemize}
  It remains to define how the new value for $R.\Rleft$ is
  computed. That is most easily described as depending on the just
  defined \emph{new} value of $R.\Rcomp$.:
  \[
  R.\Rleft \text{ becomes }
  \begin{cases}
    R.\Rright & \text{ if } R.\Rcomp \text{ is now } <\\
    R.\Rright & \text{ if } R.\Rcomp \text{ is now } = \\
    L.\Rleft & \text{ if } R.\Rcomp \text{ is now } >\\
  \end{cases}
  \]
  Dually the new value for $L.\Rright$ depends on the \emph{new} value
  of $L.\Rcomp$ (remember that always $R.\Rcomp=L.\Rcomp$):
  \[
  L.\Rright \text{ becomes }
  \begin{cases}
    L.\Rleft & \text{ if } L.\Rcomp \text{ is now } <\\
    L.\Rleft & \text{ if } L.\Rcomp \text{ is now } = \\
    R.\Rright & \text{ if } L.\Rcomp \text{ is now } >\\
  \end{cases}
  \]
\end{algorithm}

\noindent 
Figure~\ref{fig:comp} shows the relevant parts of computations for the
comparison of two numbers. In the left part of the figure initially
the larger number is on the left, in the right part the smaller number
is on the left. After the comparison in both cases the smaller number
is on the left. We remind the reader that at the left resp.\ right end
of the complete input a missing number is treated as $-\infty$ resp.\
$\infty$. Hence a number arriving at a border is simply reflected.

\begin{figure}[ht]
  \centering
  \begin{tabular}{l@{\quad}*{2}{|*{4}{>{$}c<{$}|}}c|*{4}{>{$}c<{$}|}|*{4}{>{$}c<{$}|}}
    \multicolumn{1}{c}{ }
    & \multicolumn{1}{c}{ } & \multicolumn{1}{c}{ } & \multicolumn{1}{c}{ } & \multicolumn{1}{c}{$R$} 
    & \multicolumn{1}{c}{$L$} & \multicolumn{1}{c}{ } & \multicolumn{1}{c}{ } & \multicolumn{1}{c}{ } 
    & \multicolumn{1}{c}{\qquad} 
    & \multicolumn{1}{c}{ } & \multicolumn{1}{c}{ } & \multicolumn{1}{c}{ } & \multicolumn{1}{c}{$R$} 
    & \multicolumn{1}{c}{$L$} & \multicolumn{1}{c}{ } & \multicolumn{1}{c}{ } & \multicolumn{1}{c}{ } \\
    \cline{2-9} \cline{11-18}
    \spacerb\Rleft &  &  &  & & \msbl 0 & 0 & 1 & 1 & & &  &  & & \msbl 0 & 1 & 0 & 1  \\
    \cline{2-9} \cline{11-18}
    \spacerb\Rright &  1 & 0 & 1 & 0\msbr & & &  & & &  1 & 1 & 0 & 0\msbr & & &  & \\
    \cline{2-9} \cline{11-18}
    \spacerb\Rcomp &  & &  & & & & & &  & &  & & & & & & \\
    \cline{2-9} \cline{11-18}
    \multicolumn{1}{c}{ }\\[-2ex]
    \cline{2-9} \cline{11-18}
    \spacerb\Rleft &  &  &  & \msbl 0 & 0 & 1 & 1 & & & & & & \msbl 0 & 1 & 0 & 1 & \\
    \cline{2-9} \cline{11-18}
    \spacerb\Rright & &  1 & 0 & 1 & 0\msbr & & & & & &  1 & 1 & 0 & 0\msbr & & & \\
    \cline{2-9} \cline{11-18}
    \spacerb\Rcomp &  & &  & =&= & & & & &  & &  & =&= & & & \\
    \cline{2-9} \cline{11-18}
    \multicolumn{1}{c}{ }\\[-2ex]
    \cline{2-9} \cline{11-18}
    \spacerb\Rleft &  &  &  \msbl 0 & 0 & 1 & 1 & & &  & &&\msbl 0 &0 &0 &1 & & \\
    \cline{2-9} \cline{11-18}
    \spacerb\Rright & & &  1 & 0 & 1 & 0\msbr & &  & & && 1 &1 &1 &0\msbr & & \\
    \cline{2-9} \cline{11-18}
    \spacerb\Rcomp &  & &  & >&> & & & &  & &  & & <&< & & & \\
    \cline{2-9} \cline{11-18}
    \multicolumn{1}{c}{ }\\[-2ex]
    \cline{2-9} \cline{11-18}
    \spacerb\Rleft &  &  \msbl 0 & 0 & 1 & 1 & & & & & &\msbl0 &0 &1 &1 & & &  \\
    \cline{2-9} \cline{11-18}
    \spacerb\Rright & & & &  1 & 0 & 1 & 0\msbr &  & & & & &1 &0 &1 &0\msbr & \\
    \cline{2-9} \cline{11-18}
    \spacerb\Rcomp &  & &  & >&> & & & &  & &  & & <&< & & & \\
    \cline{2-9} \cline{11-18}
    \multicolumn{1}{c}{ }\\[-2ex]
    \cline{2-9} \cline{11-18}
    \spacerb\Rleft &  \msbl 0 & 0 & 1 & 1 & & & &  & &\msbl0 &0 &1 &1 & & & & \\
    \cline{2-9} \cline{11-18}
    \spacerb\Rright & & & & & 1 & 0 & 1 & 0\msbr  & & & & & &1 &0 &1 &0\msbr \\
    \cline{2-9} \cline{11-18}
    \spacerb\Rcomp &  & &  & >&> & & & &  & &  & & <&< & & & \\
    \cline{2-9} \cline{11-18}
  \end{tabular}
  \caption{Two comparisons of $4$ bits. On the left hand side the
    number coming from the left (in the \Rright\ registers) is larger,
    on the right hand side the number coming from the right (in the
    \Rleft\ registers). If the complete numbers were longer, the
    comparisons would continue analogously. }
  \label{fig:comp}
\end{figure}

The following picture may be helpful: When the smaller number comes
from the left and the larger from the right, then from the first (most
significant) bit both numbers are reflected at the border between the
$R$- and the $L$-cell. When the larger number comes from the left and
the smaller from the right, then from the first (most significant) bit
both numbers pass through the border. This is a correct picture even
if the numbers have identical higher order bits, and hence in the
beginning no cell knows which is the present case. Readers are
encouraged to check Figure~\ref{fig:comp} again.

At least from a formal point of view it is now straightforward to put
the pieces together. An even better intuition of how the algorithm
works may arise in the subsequent Subsection~\ref{subsec:base-correct}
when the correctness of the algorithm will be shown.

\begin{samepage}
\begin{algorithm}
  \label{alg:base}
  \leavevmode\mbox{ }
  \begin{itemize}
  \item First the setup phase is done as described in
    Algorithm~\ref{alg:base-setup}. This phase takes $k$ steps. It is
    stopped at the correct time using a synchronization algorithm in
    each block separately. Both ends of each block have to act as
    generals in order to achieve the required synchronization time.
  \item Once all cells are synchronized they will work as described in
    Algorithms~\ref{alg:base-c1} and~\ref{alg:base-c2}: All numbers
    are shifted to the left or right, and whenever two most
    significant bits meet, the sequential comparison of the two
    numbers is started. The smaller number is sent to the left and the
    larger to the right.
  \item This is repeated until the synchronization started immediately
    after the setup phase fires all $nk$ cells after $nk$ steps.

    It will be shown that at that point in time the \Rleft registers
    contain the sorted numbers.
  \end{itemize}
\end{algorithm}
\end{samepage}

\subsection{Correctness of the base sorting algorithm}
\label{subsec:base-correct}

The correctness of algorithm~\ref{alg:base} is essentially due to the
correctness of odd-even transposition sort.  This is basically a proof
by induction. The main parts are stated in the following Lemma.

\begin{lemma}
  \label{lem:base-correct-core}
  Let $\mathcal{X}=\lsbl xX\msbr$ and $\mathcal{Y}=\msbl Yy\lsbr$ be
  two numbers with the $k/2$ higher order bits denoted by capital
  letters and the $k/2$ lower order bits denoted by small
  letters. Similarly let $\mathcal{A}=\msbl Aa\lsbr$ be the minimum of
  $\mathcal{X}$ and $\mathcal{Y}$ and $\mathcal{B}=\lsbl bB\msbr$ be
  the maximum. In other words one basic compare-and-exchange step of
  odd-even transposition sort transforms the pair $\mathcal{X},
  \mathcal{Y}$ into the pair $\mathcal{A}, \mathcal{B}$.
  
  If the most significant bits of $X\msbr$ and $\msbl Y$ meet in an
  $R\Vert L$-pair (with the other higher order bits following) and if
  after $k/2$ steps the lower order bits $\lsbl x$ and $y\lsbr$ arrive
  in the correct order, then during the first $k/2$ steps the higher
  order bits of $\msbl A$ and $B\msbr$ will be produced moving to the
  right directions, followed by the lower order bits $a\lsbr$ and
  $\lsbl b$ afterwards.
\end{lemma}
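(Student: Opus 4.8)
The plan is to trace the two numbers bit by bit through the $R\Vert L$ pair over the $k/2$ steps of a sub-phase, distinguishing the two possible outcomes of the comparison and verifying that in each case the correct output bits emerge in the correct directions. Concretely, let $c$ be the value of $R.\Rcomp = L.\Rcomp$ at the start of the sub-phase; by Algorithm~\ref{alg:base-c1}(3) and the setup, $c = {=}$ initially. At step $j$ ($1 \le j \le k/2$), the $j$-th most significant bit of $X\msbr$ sits in $R.\Rright$ and the $j$-th most significant bit of $\msbl Y$ sits in $L.\Rleft$; I would first establish this "alignment invariant" by an easy induction using C1, noting that these bits are shifted in lockstep toward the pair. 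Then I would apply the update rule of Algorithm~\ref{alg:base-c2}: as long as $\Rcomp$ is still $=$, the two bits are compared; the first position where they differ sets $\Rcomp$ permanently to $<$ or $>$, and thereafter the bits pass through unchanged.

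The heart of the argument is the case analysis on that first differing position $j_0$ (with the degenerate sub-case $X = Y$ on the high bits, where $\Rcomp$ stays $=$ throughout, handled uniformly). Suppose $\mathcal{X} \le \mathcal{Y}$, i.e.\ at $j_0$ we have $R.\Rright = 0 < 1 = L.\Rleft$ (or $X=Y$ and the tie is broken later by the low bits, by hypothesis). Then for $j < j_0$ the bits agree, $\Rcomp = {=}$, and rule C2 makes $R.\Rleft \leftarrow R.\Rright$ and $L.\Rright \leftarrow L.\Rleft$ — the bits cross the border, so $\msbl Y$'s high bits stream leftward into $R.\Rleft$ and $X$'s high bits stream rightward into $L.\Rright$. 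From step $j_0$ on, $\Rcomp = {<}$, and C2 again gives $R.\Rleft \leftarrow R.\Rright$, $L.\Rright \leftarrow L.\Rleft$: the crossing continues. Hence the entire high half $\msbl Y \cdots$ exits to the left and $X \cdots \msbr$ exits to the right. Since $\mathcal{X} \le \mathcal{Y}$, the minimum $\mathcal{A}$ has high half $\msbl Y$ — wait, this is exactly the point where one must be careful: $\mathcal{A} = \msbl A a\lsbr$ is the minimum as a $k$-bit number, and its high bits $A$ need not equal $Y$ unless the high bits already decide the comparison. This is the subtlety the hypothesis "if after $k/2$ steps the lower order bits arrive in the correct order" is designed to absorb: in the tie case on the high bits, $\msbl X$ and $\msbl Y$ are bitwise equal on the top half, so it does not matter which physical copy goes which way — and the low bits, by assumption, have been arranged (by the $R\Vert L$ pairs at the block boundaries, one sub-phase earlier) to follow in the order dictated by the min/max. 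I would state this matching of "who is smaller" explicitly and then observe that the high-bit stream leaving the pair, concatenated with the low-bit stream that arrives next, is precisely $\msbl A a\lsbr$ moving left and $\lsbl b B\msbr$ moving right.

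For the symmetric case $\mathcal{X} > \mathcal{Y}$ — first differing high bit has $R.\Rright = 1 > 0 = L.\Rleft$ — from $j_0$ on $\Rcomp = {>}$, so C2 sets $R.\Rleft \leftarrow L.\Rleft$ and $L.\Rright \leftarrow R.\Rright$: the bits are \emph{reflected}, each number bouncing back the way it came. Thus $X\msbr$'s high bits return leftward and $\msbl Y$'s return rightward; since now $\mathcal{X}$ is the maximum, $\mathcal{B} = \lsbl b B\msbr = $ (reflected $\mathcal{X}$) goes right and $\mathcal{A} = \mathcal{Y}$ goes left — again matching the claim once the low bits follow correctly. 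The main obstacle, as indicated above, is bookkeeping the tie case cleanly: one must argue that the lemma's conditional form is exactly what makes the induction go through — the high-bit behavior of one $R\Vert L$ pair is only guaranteed \emph{given} that the adjacent pairs delivered the low bits correctly, and it is the outer induction (on sub-phases, invoking odd-even transposition sort's correctness) that ties these conditional guarantees together. I would therefore keep the proof of this lemma purely local — two case analyses plus the alignment invariant — and defer the global stitching to the surrounding argument that uses it.
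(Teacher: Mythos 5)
The paper itself offers essentially no proof of this lemma --- it says only that the statement ``is basically a restatement of the construction from Algorithm~\ref{alg:base-c2}'' --- so your attempt to supply the missing bit-level trace is exactly the right kind of effort, and your overall plan (alignment invariant, case analysis on the first differing high-order bit, and using the hypothesis about the low-order bits to absorb the tie case) is the natural one. However, there is a genuine error in the execution: you have inverted the semantics of the two branches of Algorithm~\ref{alg:base-c2}. The assignment $R.\Rleft \leftarrow R.\Rright$ takes the bit of $\mathcal{X}$ that just arrived moving rightward and places it into the left-moving stream; that is \emph{reflection} of $\mathcal{X}$, not a crossing of $\mathcal{Y}$'s bit. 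Dually, $R.\Rleft \leftarrow L.\Rleft$ lets $\mathcal{Y}$'s left-moving bit continue leftward, i.e.\ \emph{pass-through}. So the $<$ and $=$ branches reflect both numbers, and the $>$ branch lets both pass through --- exactly as the paper's own prose states (``when the smaller number comes from the left \dots both numbers are reflected''; ``when the larger number comes from the left \dots both numbers pass through''). Your trace says the opposite in both cases.

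This is not merely terminological, because it corrupts the conclusion of the case analysis. In the case $\mathcal{X}<\mathcal{Y}$ decided at a high-order bit, the minimum is $\mathcal{X}$ and hence $A=X$ must exit to the left; the algorithm achieves this by reflecting $\mathcal{X}$. Your trace instead concludes that ``$\msbl Y$'s high bits stream leftward,'' i.e.\ that the \emph{maximum}'s high half goes left, and the ``wait, one must be careful'' aside then misattributes the resulting mismatch to the tie case, when in the strict case there is no ambiguity at all --- the discrepancy comes from the inverted dataflow. The symmetric case is internally inconsistent (``$X\msbr$'s high bits return leftward \dots $\mathcal{B}=$ (reflected $\mathcal{X}$) goes right''). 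Once you swap the two branch descriptions back, the case analysis does close, and your reading of the conditional hypothesis --- that the guarantee about the low-order bits arriving correctly is what lets the per-pair, per-sub-phase statements be chained into the global induction tracking odd-even transposition sort --- is sound and is worth keeping.
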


This is basically a restatement of the construction from
Algorithm~\ref{alg:base-c2}.

Since the configuration produced by the setup phase corresponds
to the initial configuration for the odd-even transposition sort, and
since the preconditions of the if-statement in
Lemma~\ref{lem:base-correct-core} are met, an induction teaches that
in particular the higher order bits of each number after $t$ phases
are in the sub-block corresponding to its position in odd-even
transposition sort after $t$ sorting steps.

\begin{corollary}
  \label{cor:base-correct}
  For all input sequences $w_1,\dots, w_n$ Algorithm~\ref{alg:base}
  does sort the numbers as required in Problem~\ref{prob:sorting}.
\end{corollary}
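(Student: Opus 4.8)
The plan is to derive the corollary from the known correctness of odd-even transposition sort on $N=2n$ numbers, using Lemma~\ref{lem:base-correct-core} as the bridge that turns one CA phase into the corresponding compare-exchange rounds. First I would fix a \emph{correspondence} between a CA configuration taken at the boundary of two phases and a configuration of odd-even transposition sort on the $2n$ numbers $\dleft{w_1},\dright{w_1},\dots,\dleft{w_n},\dright{w_n}$, reading the arrows of the sort off the \Rleft/\Rright distinction (a number held in the \Rleft registers carries a left arrow, one held in \Rright a right arrow). The heart of the proof is then the invariant $I_t$: after the setup phase followed by $t$ further phases, the CA configuration corresponds under this map to the configuration reached by odd-even transposition sort after $2t$ of its compare-exchange rounds (two per phase), started from the parallel-copies configuration. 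I would prove $I_t$ by induction on $t$.

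\emph{Base case.} By Algorithm~\ref{alg:base-setup} the setup phase produces, in exactly $k$ steps, the configuration of Figure~\ref{fig:ex-prep}: the \Rleft registers hold the original numbers, the \Rright registers their mirror images, every sub-block carries its $L$- and $R$-markers, and every \Rcomp register has value $=$. This is precisely the image of the start configuration of the sort, in which every $\dleft{w_i}$ points towards $\dright{w_{i-1}}$ on its left and every $\dright{w_i}$ towards $\dleft{w_{i+1}}$ on its right, so that the first compare-exchange round pairs $\dright{w_i}$ with $\dleft{w_{i+1}}$; hence $I_0$ holds.

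\emph{Inductive step.} Assuming $I_{t-1}$, a phase consists of two sub-phases of $k/2$ steps, and by Algorithms~\ref{alg:base-c1} and~\ref{alg:base-c2} every bit is shifted in its own direction except inside an $R\Vert L$ pair, where the comparison of Algorithm~\ref{alg:base-c2} is carried out. Lemma~\ref{lem:base-correct-core} guarantees that whenever two most significant bits meet in such a pair (with the remaining higher-order bits following) and the two lower halves arrive in the correct order a sub-phase later, the compare-exchange is performed correctly: the higher halves of the minimum and of the maximum leave first, directed left resp.\ right, and the lower halves follow. So the inductive step reduces to checking that both hypotheses of that lemma hold at \emph{every} $R\Vert L$ pair active during the phase, which I would do by carrying along a finer invariant on the exact whereabouts of every bit at each sub-phase boundary: every number moves as a rigid block of $k$ cells with its $k/2$ higher-order bits leading, sub-blocks have length $k/2$, and shifting has unit speed, so (i) at the start of a sub-phase the most significant bits of the two numbers to be compared in that round sit exactly in the relevant $R\Vert L$ pair, and (ii) a number's lower half reaches any given $R\Vert L$ pair exactly one sub-phase after its higher half did, and — by the same rigidity — so does its partner's lower half, at the moment when the \Rcomp register of that pair still records the comparison built from the two higher halves, which is exactly the case treated in Algorithm~\ref{alg:base-c2}. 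This also disposes of the phenomenon flagged in the text: a number is indeed compared at two adjacent $R\Vert L$ pairs at once, but at the pair holding its higher half it starts a fresh comparison, while at the pair holding its lower half it merely continues the comparison its own higher half began one sub-phase earlier. Together with Algorithm~\ref{alg:base-c1} for the cells that only shift, this yields $I_t$.

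\emph{Conclusion.} Odd-even transposition sort on $N=2n$ numbers produces the sorted sequence after exactly $N=2n$ rounds \cite{Knuth98}. By $I_n$, after the setup phase and the $n$ phases the CA configuration corresponds to that sorted configuration; the sorted multiset of the $2n$ copies is $w_{\sigma(1)},w_{\sigma(1)},\dots,w_{\sigma(n)},w_{\sigma(n)}$, the last round leaves exactly one left-directed and one right-directed copy in each block, and the last step keeps the \Rleft registers — which by then hold the sorted numbers — while discarding the \Rright copies, the markers and the arrows as described for Algorithm~\ref{alg:base}. Hence the array is left holding $w_{\sigma(1)}\cdots w_{\sigma(n)}$, which is stable under the transition rule, and Algorithm~\ref{alg:base} sorts every admissible input — the assertion of the corollary. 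The step I expect to be the real obstacle is the inductive step, specifically setting up and maintaining the fine position invariant that certifies the hypotheses of Lemma~\ref{lem:base-correct-core} at every active $R\Vert L$ pair and that reconciles the simultaneous participation of one number in two neighbouring comparisons; once that is in place, the reduction to odd-even transposition sort is routine.
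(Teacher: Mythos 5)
Your proof is correct and follows essentially the same route as the paper: an induction over phases identifying the CA configuration with the state of odd-even transposition sort on the $2n$ copies (with Lemma~\ref{lem:base-correct-core} as the bridge), followed by the observation that each block ends up holding the same number twice so that the \Rleft registers of the full block carry the sorted value. You are merely more explicit than the paper about checking the hypotheses of the lemma at each active $R\Vert L$ pair, which the paper dispatches in one sentence just before stating the corollary.
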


\begin{proof}
  Since odd-even transposition sort does sort $N$ numbers in $N$
  sorting steps, it immediately follows from
  Lemma~\ref{lem:base-correct-core} that at the end of phase $n$ of
  Algorithm~\ref{alg:base} the higher order bits of each of the $n$
  input numbers and of its $n$ copies are in the correct blocks. That
  is, there are the same higher order bits of a number $w_i$ in each
  block twice, once in the \Rleft registers and once in the \Rright
  registers.

  Furthermore it is clear that the most significant bit of the number
  stored in the \Rleft (resp.\ \Rright) registers is in the $L$-cell
  (resp.\ $R$-cell) of the \emph{block} (not sub-block). This implies
  that the lower order bits are in the same block:

  \begin{tabular}{l@{\quad}cc}
    & \multicolumn{1}{l}{$L$} & \multicolumn{1}{r}{$R$} \\
    \cline{2-3}
    \spacerb\Rleft & \multicolumn{1}{||c||}{$\msbl k/2$ higher order bits of $w_i$} 
    & \multicolumn{1}{c||}{$k/2$ lower order bits of $w_i\lsbr$}  \\
    \cline{2-3}
    \spacerb\Rright & \multicolumn{1}{||c||}{$\lsbl k/2$ lower order bits of $w_i$}
    & \multicolumn{1}{c||}{$k/2$ higher order bits of $w_i\msbr$}  \\
    \cline{2-3}
    \\
  \end{tabular}

  \noindent
  Therefore the \Rleft registers of the full block hold the correct
  value.
\end{proof}

\section{A sorting algorithm matching the lower bound}
\label{sec:fast-alg}

We will save $k$ steps of the running time of Algorithm~\ref{alg:base}
by starting with some comparisons not after $k$ but already after
$k/2$ steps and stopping the odd-even transposition sort $k/2$ steps
earlier. A detailed description will be given in
Section~\ref{subsec:speedup}. The resulting algorithm still computes
the correct output \emph{except} for the rightmost block. This will be
fixed in Section~\ref{subsec:fix-right}.

\subsection{Speeding up the algorithm}
\label{subsec:speedup}

We describe the fast algorithm as three changes to Algorithm~\ref{alg:base}.

The first change is simple: Since we want to have the result after
$nk$ steps instead of $k+nk$, the synchronization of all $nk$ cells is
not started after the setup phase, but in the very first step.

The second change concerns the computation of the mirror of a bit
string as required by Algorithm~\ref{alg:base-setup}. It can be
implemented by shifting the original to the left, and letting the
$L$-cell of the block act as reflector sending bits back to the right
in the \Rright cells. This means that after $k/2$ steps the lower
order bits of an input number have arrived in the \Rleft registers of
the left sub-block and the higher order bits are its \Rright
registers. It is useful if the shift to the left is not done using a
temporary register but \Rleft. In Figure~\ref{fig:mirror} the
resulting process is shown for two adjacent $6$-bit numbers. It can be
seen that due to the simultaneous shift to the left, already after
$k/2$ steps for the first time most significant bits meet. (This also
determines the border of the sub-blocks.) Thus comparisons can be
started $k/2$ steps earlier. The rightmost block now needs some
special attention. Analogously to the other blocks we assume that
symbols representing the ``number $\infty$'' are shifted to the left
from the rightmost cell. This is also depicted in
Figure~\ref{fig:mirror}.

This completes the second change to Algorithm~\ref{alg:base}.

\begin{figure}[ht]
  \centering
  \begin{tabular}{l@{\quad}||*{6}{>{$}c<{$}|}|*{6}{>{$}c<{$}|}|}
    \cline{2-13}
    \spacerb\Rleft & \msbl a_6 & a_5 & a_4 & a_3 & a_2 & a_1\lsbr& \msbl b_6 & b_5 & b_4 & b_3 & b_2 & b_1\lsbr \\
    \cline{2-13}
    \spacerb\Rright & & & & & & & & & & & & \\
    \cline{2-13}
    \multicolumn{7}{c}{ }\\
    \cline{2-13}
    \spacerb\Rleft & a_5 & a_4 & a_3 & a_2 & a_1\lsbr & \msbl b_6 & b_5 & b_4 & b_3 & b_2 & b_1\lsbr&\infty \\
    \cline{2-13}
    \spacerb\Rright & a_6\msbr& & & & & & b_6\msbr & & & & & \\
    \cline{2-13}
    \multicolumn{7}{c}{ }\\
    \cline{2-13}
    \spacerb\Rleft & a_4 & a_3 & a_2 & a_1\lsbr & \msbl b_6 & b_5 & b_4 & b_3 & b_2 & b_1\lsbr& \infty& \infty\\
    \cline{2-13}
    \spacerb\Rright & a_5 & a_6\msbr& & & & & b_5 & b_6\msbr& & & & \\
    \cline{2-13}
    \multicolumn{7}{c}{ }\\
    \cline{2-13}
    \spacerb\Rleft & a_3 & a_2 & a_1\lsbr & \msbl b_6 & b_5 & b_4 & b_3 & b_2 & b_1\lsbr& \infty& \infty& \infty\\
    \cline{2-13}
    \spacerb\Rright & a_4 & a_5 & a_6\msbr& & & &  b_4 & b_5 & b_6\msbr& & & \\
    \cline{2-13}
  \end{tabular}
  \caption{Computing the mirror of two $6$-bit numbers. Already after
    $3$ steps most significant bits meet at the border between
    sub-blocks.}
  \label{fig:mirror}
\end{figure}

The third change is the most complicated. For the shifts to the right
\emph{two} registers are used, \Rright and \RRIGHT. The additional
register \RRIGHT is empty almost everywhere. After each sub-phase
there are only two adjacent sub-blocks in which \RRIGHT stores a
number:
\begin{enumerate}
\item The initialization takes place in the leftmost block: During the
  first $k/2$ steps the reflected bits are shifted to the right in
  \Rright and \RRIGHT. And this is done \emph{only} during the first
  $k/2$ steps, but not afterwards.
\item Then it happens for the first time that three most significant
  bits meet, two coming from the left and one coming from the right.

  In such a case the comparisons are done as follows:
  \begin{itemize}
  \item The largest of the three numbers is shifted to the right in
    \RRIGHT.
  \item The other two numbers are compared and shifted as described in
    Algorithm~\ref{alg:base-c2}.
  \end{itemize}
\end{enumerate}
A consequence of these rules is, that after $k$ steps the \RRIGHT
registers are used in the sub-blocks of block $1$, after $2k$ steps in
the sub-blocks of block $2$, etc.\ and after $nk$ steps in the
sub-blocks of block $n$, and nowhere else.

Why do these three changes lead to a result after $nk$ steps where in
all blocks except the rightmost one the \Rleft registers contain the
correct numbers? That the rightmost block can still be wrong can be
seen in examples.

First of all, reflecting $w_1$ twice make sure that there are really
$2n$ numbers which are sorted, each $w_i$ twice. Therefore in the end
each block will again contain twice the same number.  This would in
general not be the case if $w_1$ would be reflected only once, and the
argument below would fail.

Since we are still using odd-even transposition sort, it is clear that
after $k/2+nk$ steps the correct results are obtained everywhere, but
with the most significant bits in the middle of the blocks. Thus the
result would look like

\begin{center}
  \begin{tabular}{l@{\quad}cc}
    & \multicolumn{1}{l}{$L$} & \multicolumn{1}{r}{$R$} \\
    \cline{2-3}
    \spacerb\Rleft & \multicolumn{1}{||c||}{$k/2$ lower order bits of $w_i\lsbr$} 
    & \multicolumn{1}{c||}{$\msbl k/2$ higher order bits of $w_i$}  \\
    \cline{2-3}
    \spacerb\Rright & \multicolumn{1}{||c||}{$k/2$ higher order bits of $w_i\msbr$}
    & \multicolumn{1}{c||}{$\lsbl k/2$ lower order bits of $w_i$}  \\
    \cline{2-3}
    \\
  \end{tabular}
\end{center}

How can such a left sub-block arise? Without loss of generality assume
that the input numbers are pairwise different. Then in the left
neighboring full block a smaller number (or $-\infty$) is
present. Hence the lower left half must have been reflected and going
back $k/2$ steps, the bits must have been in the upper
part. Analogously the upper right half must have been in the lower
part. Except in the rightmost block (where the \RRIGHT registers are
in use) there is no other possibility than that the lower order bits
are in the remaining registers:

\begin{center}
  \begin{tabular}{l@{\quad}cc}
    & \multicolumn{1}{l}{$L$} & \multicolumn{1}{r}{$R$} \\
    \cline{2-3}
    \spacerb\Rleft & \multicolumn{1}{||c||}{$\msbl k/2$ higher order bits of $w_i$}
    & \multicolumn{1}{c||}{$k/2$ lower order bits of $w_i\lsbr$}   \\
    \cline{2-3}
    \spacerb\Rright & \multicolumn{1}{||c||}{$\lsbl k/2$ lower order bits of $w_i$}
    & \multicolumn{1}{c||}{$k/2$ higher order bits of $w_i\msbr$}  \\
    \cline{2-3}
    \\
  \end{tabular}
\end{center}
Thus the \Rleft registers hold the desired result.

In the rightmost block it can happen that lower order bits are not
stored in the \Rleft registers of the right sub-block but in the
\RRIGHT registers of the left sub-block.

\subsection{Determining the rightmost output block}
\label{subsec:fix-right}

In order to produce the largest number in the rightmost output block
we use a separate algorithm which has to be run in parallel to the one
described above. It will have finished after $nk$ steps.  Remember
that the input is a word
\[
w=w_1\cdots w_n=\msbl x_1 y_1 z_1\lsbr\cdots \msbl x_n y_n z_n \lsbr
\]
and the task is to have the maximum of the $w_i$ be stored in the
rightmost block in the end. This can be achieved as follows.

\begin{algorithm}
\label{alg:fast-max}
  \leavevmode\mbox{ }
  \begin{enumerate}
  \item During the $k$ steps of the setup phase a signal is sent
    from the right end of the input until it reaches the most
    significant bit of $w_n$, marking all cells as belonging to the
    last block.

    When the last block is synchronized after $k$ steps, all cells
    have received the information and know that they have an
    additional task.
  \item From the very first step \emph{all} cells shift their input to
    the right using an additional register.
    
    Hence after $1\cdot k$ steps the number $w_{n-1}$ reaches the last
    block, after $2\cdot k$ steps number $w_{n-2}$ reaches the last
    block, etc.\ and after $(n-1)k$ steps number $w_1$ reaches the
    last block.
  \item The cells in the rightmost block use two additional registers
    for storing numbers; call them \Rmax and \Rnext. Register \Rmax is
    initialized in the very first step with $w_n$, register \Rnext is
    marked as not holding a value. Whenever the rightmost block ends a
    phase, in register \Rnext the number is stored that has arrived
    from the left in \Rright because of the shifting.
  \item If at the beginning of a phase register \Rnext has a valid
    number the rightmost block computes $\Rmax \leftarrow \max(\Rmax,
    \Rnext)$. For this a signal is sent from left to right, that is
    from the most significant bit to the least significant bit,
    comparing \Rnext and \Rmax. (While this comparison takes place the
    next number is already arriving in \Rright.)

    As long as the same bit value is found in both registers nothing
    is changed and the signal moves one cell to the right.

    As soon as at some position for the first time different bit
    values are found, the following happens:
    \begin{itemize}
    \item If \Rnext has a $1$ bit, but \Rmax has a $0$ bit,
      \Rmax is smaller than \Rnext and this and all remaining
      bits are copied from \Rnext to \Rmax.
    \item If \Rnext has a $0$ bit, but \Rmax has a $1$ bit,
      \Rmax is larger than \Rnext and the signal is simply
      killed leaving \Rmax unchanged.
    \end{itemize}

  \end{enumerate}
\end{algorithm}

\noindent
It is straightforward to verify by induction that for $1\leq i<n$ after
phase $i$ one has
\[
  \Rmax = \max \{w_{n-j}\mid 0\leq j<i\} 
\]
and hence in the end $\Rmax = \max \{w_i \mid 1\leq i \leq n\}$ as
required.  Since $w_1$ is copied to \Rnext after $(n-1)k$ steps the
final correct value is stored in \Rmax $k$ steps later, i.e.\ after
$nk$ steps as required.

Taking together the changes to the base Algorithm~\ref{alg:base}
described in Section~\ref{subsec:speedup} and the additional algorithm
just described one gets a proof of Theorem~\ref{thm:upper-bound-nk}.

\section{Conclusion}
\label{sec:conclusion}

We have shown the sorting of $n$ numbers with $k$ bits can be achieved
in (almost) real-time. Thus the situation is very similar to the
firing squad synchronization problem: There is an algorithm which has
--- in our case except for one step --- a running time matching a lower
bound.

Clearly, the number of states per cell required by our algorithm is
finite but large, at least when compared to algorithms e.\,g.\ for the
synchronization problem. We do not know how much the set of states can
be reduced.

The authors gratefully acknowledge a number of suggestions by the
referees for improving the presentation of the algorithms.


\begin{thebibliography}{1}

\bibitem{EmdeBoas90}
Peter van Emde Boas.
\newblock Machine Models and Simulations.
\newblock \emph{Handbook of Theoretical Computer Science, Volume A}, 1--66,
\newblock Elsevier, 1990.

\bibitem{Knuth98}
Donald Knuth.
\newblock {\em The Art of Computer Programming; Volume 3, 2nd ed.},
\newblock Addison-Wesley, 1998.

\bibitem{Nishio75}
Hidenosuke Nishio.
\newblock Real time sorting of binary numbers by 1-dimensional cellular
  automaton.
\newblock In {\em Proceedings of the International Symposium on Uniformly
  Structured Automata and Logic, Tokyo, Japan, August 21-23, 1975, IEEE Catalog
  Number 75 CH1052-OC}, pages 153--162, 1975.

\bibitem{Schorr83}
Amir R. Schorr.
\newblock Physical parallel devices are not much faster than sequential ones.
\newblock \emph{Information Processing Letters}, 17(2):103--106, 1983.

\bibitem{Umeo05}
Hiroshi Umeo, Masaya Hisaoka, and Takashi Sogabe.
\newblock A Survey on Optimum-Time Firing Squad Synchronization
Algorithms for One-Dimensional Cellular Automata.
\newblock \emph{Int. Journal of Unconventional Computing}, 1(4):403--426, 2005.

\end{thebibliography}
\end{document}